\newcommand{\letraEstado} [1] {{\small\texttt{#1}}}
\newcommand{\letraFunc} [1] {{\small{#1}}}
\newtheorem{lem}{Lemma} 
\newtheorem{thm}{Theorem}
\title{Specification of Products and Product Lines}
\author{Ariel Gonzalez
\institute{Universidad Nacional de Rio Cuarto\\ Rio Cuarto, Argentina}
\email{agonzalez@dc.exa.unrc.edu.ar}
\and
 Carlos Luna
\institute{Universidad ORT\\
Montevideo, Uruguay}
\email{luna@ort.edu.uy}
}
\begin{document}
\maketitle

\begin{abstract}
The study of variability in software development has become increasingly important in recent years. A common mechanism to represent the variability in a product line is by means of feature models. However, the relationship between these models and UML design models is not straightforward. UML statecharts are extended introducing variability in their main components, so that the behavior of product lines can be specified. The contribution of this work is the proposal of a rule-based approach that defines a transformation strategy from extended statecharts to concrete UML statecharts. This is accomplished via the use of feature models, in order to describe the common and variant components, in such a way that, starting from different feature configurations and applying the rule-based method, concrete state machines corresponding to different products of a line can be obtained.
\end{abstract}

\section{Introduction}

A product line (PL), also called system family, is a set of software systems sharing a common, managed set of features that satisfy the specific needs of a particular market segment or mission and that are developed from a common set of core assets in a prescribed way \cite{CN02,Gom04,KLD02}.

To develop a system family as opposed to developing a set of isolated systems has important advantages. One of these is the possibility of building a kernel that includes common features, from which desired products can be easily built as extensions. By adding distinguishing characteristics (variability) to such a kernel, different products can be obtained \cite{KLD02,vdML02,HP03,ZDD06,Cla01,CA05}. For example, today we observe in the market a significant number of different types of mobile phones (MPs) that share a core of basic features and differ in other more specific characteristics: availability of a digital camera, internet access, mp3 player, among others.

The UML language \cite{OMG04} provides a graphical notation and has become the standard for modeling different aspects of software systems. Statecharts and interaction diagrams are part of the set of tools that UML provide so that the system behavior can be specified, which are specially suitable for the software design phase. Statecharts are used to specify the behavior of the instances of a class (intra-component behaviour), and therefore constitute an appropriate mechanism for describing the behavior of certain problems by means of a graphical representation. In the latest version of UML 2.0, the statecharts do not offer operators and/or sublanguages for specifying system families.
 
In this paper we propose an extension of UML statecharts for modeling PLs. We used feature models so that both common and variant functionality of a system family can be described \cite{CE00,CA05}, and we incorporate variability in the essential components of the statecharts, in such a way that, starting from different configurations of a feature model, concrete statecharts corresponding to different products of a PL can be generated applying a rule-based method. The approach defines the transformation strategy from extended statecharts to concrete UML statecharts.

The rest of the work is organized as follows. In section 2 and 3 we briefly introduce statecharts and feature models, respectively. Section 4 presents an extension of statecharts with variant elements, which together with the use of feature models allow specifying PLs. In section 5 we detail the mechanisms for obtaining products of a PL from distinct configurations of a feature model, via the use of rules and a transformation strategy. Related work is discussed in section 6 and finally, we conclude and discuss possible further work in section 7. We exemplify the proposed work by developing part of a case study based on mobile phone technology. A preliminary version of this work is \cite{SCCC08}. As opposed to \cite{SCCC08}, this paper presents a rule-based approach. We define an application strategy for the rules in a proper manner, in such a way that inconsistencies are avoided in the statechart obtained. The rules are organized in a sequence of rule sets, in which each rule set can be considered as a layer. Within a rule set, the rules may be applied in a non-deterministic order \cite{JK06}. We also formalized and added rules that in \cite{SCCC08} are omitted or only described informally. 

\section{Statecharts}
UML StateCharts (SCs) constitute a well-known specification language for modeling the dynamic system behavior. SCs were introduced by D. Harel \cite{Har87} and later incorporated in different versions of the UML with some variations. In this section, we present definitions of SCs based on \cite{vdB02}. For additional details, the reader is referred to \cite{SCCC08,vdB02}.

SCs consist essentially of states and transitions between states. The main feature of SCs is that states can be refined, defining in this way a state hierarchy. A state decomposition can be sequential or parallel. In the first case, a state is decomposed into an automata (Or-state). The second case is supported by a complex statechart composed of several active sub-statecharts (And-state), running simultaneously. 

Let $S, TR, \Pi$ and $A (\Pi \subseteq A)$ be countable sets of state names, transition names, events and actions of a SC, respectively. Also, let us define $s \in S$ as either a basic term of the form $s = [n]$ (Simple-state), as a term Or of the form $s = [n, (s_1,...,s_k), l, T]$ (Or-state), or as a term And of the form  $s = [n, (s_1,...,s_k)]$ (And-state), where $name(s)=_{def} n$ is the name of the state $s$. Here $(s_1,...,s_k)$ are the subterms (substates) of $s$, also denoted by $sub\_est(s)=_{def} (s_1,...,s_k)$. Likewise, $inicial(s)=_{def} s_1$ is the initial state of $s$, $T \subseteq TR$ is the set of internal transitions of $s$, and $l$ the active state index of $s$. A transition is represented as a tuple $t = (t', s_o, e, c, \alpha, s_d, ht)$, where $name(t)=_{def} t'$ is the transition name, $source(t) =_{def} s_o$ and $target(t) =_{def} s_d$  are called source and target of $t$, respectively, $ev(t) =_{def} e$ the trigger event, $cond(t) =_{def} c$ the trigger condition, and $acc(t) =_{def} \alpha$ is the sequence of actions that are carried out when a transition is triggered. In addition, $hist(t) =_{def} ht$ is the history type of the target state of $t$ \cite{vdB02}. The graphical notation used in the transitions is $t: e,c /\alpha$. 

\section{Feature Models}
Feature Models (FMs) are used to describe properties or functionalities of a domain. A functionality is a distinctive characteristic of a product or object, and depending of the context it may refer to, it is a requirement or component inside an architecture, and even code pieces, among others. FMs allow us to describe both commonalities and differences of all products of a PL and to establish relationships between common and variant features of the line. There are multiple notations for describing FMs. In this work we will use the proposal of Czarnecki \cite{CE00}.

A tree structure instance is a \emph{FM configuration (FMConf)} that describes the model and that respects the semantics of their relations. That is, a FM allows one to identify common and variant features between products of a PL, while a FM configuration characterizes the functionalities of a specific product. Formally, the concepts of FMs are defined as follows:

\textbf{Definition 1.}
A \emph{FM} is defined as a tree structure represented by a tuple ($Funcs, f_0, Mand, Opt, Alt,\\ Or$-$rel$), where $Funcs$ is a set of functionalities of a domain (nodes of the tree), $f_0 \in Funcs$ is the root functionality of the tree and, $Mand, Opt, Alt, Or$-$rel \subseteq Funcs \times ( \gamma (Funcs)- \{\emptyset\}  )$ the mandatory, optional, alternative and disjunct relations of the model, respectively. If (f,sf)$\in Mand \cup Opt$, \#sf = 1.

\textbf{Definition 2.}
A \emph{FM configuration} corresponding to a \emph{FM} ($Funcs,\:f_0,\:Mand,\:Opt,\:Alt,\:Or$-$rel$) is a tree $(F,\:R)$ where $F$ is the set of nodes and $R$ the set of edges; $F \subseteq Funcs$ and $R \subseteq$  \{(f,sf)$ \in F \times$($\gamma$($F$)$-$\{$\emptyset$\}) $| \; \exists$ sf' $\in \gamma$(Funcs): sf $\subseteq $ sf' $\wedge $ (f,sf') $\in Mand \cup Opt \cup Alt \cup Or$-$rel$\}. Moreover, the following conditions must be fulfilled by $(F,\:R)$: (1) $f_0 \in F$; (2) for every (f,sf) $\in Mand$: if $f \in F$ then (f,sf) $\in R$; (3) if (f,sf) $\in Alt$ $\wedge \:f \in F$  then $\exists !$ sf'$\in \gamma$($F$): sf'$\subseteq$sf $\wedge$ (f,sf') $\in R$ $\wedge$ \#sf'=1; (4) if (f, sf) $\in Or$-$rel$ $\wedge$ $f \in F$ then $\exists !$ sf'$\in \gamma$($F$): sf'$\subseteq$sf $\wedge$ (f,sf') $\in R \:\wedge$ \#sf'$\geq$1.

\textbf{Definition 3.}
The \emph{kernel N} of a FM ($Funcs,\:f_0,\:Mand,\:Opt,\: Alt,\: Or$-$rel$) is the set of functionalities, which are present in all configurations, inductively characterized by the following rules: (1) $f_0 \in N$; (2) if $f_1 \in N$ $\wedge$ $(f1, \{f2\}) \in Mand$ then $f_2 \in N$.

\section{SCs with Variabilities}
In this section we extend the SCs with \emph{optional (variant) elements} and later on we establish the binding between these elements with functionalities of a FM, in order to model the behavior of a PL. We will call our proposed extended machines StateCharts$^*$ (SCs$^*$).

\subsection{Graphical Representation of a SC$^*$}
The representation of the optional elements that extend the system kernel in a SC$^*$ are depicted in figure~\ref{fig:EstadoyTransicionOpcional}. We use dashed lines to graphically denote both optional states as well as transitions.
\begin{figure}[!h]
\begin{center}
 \includegraphics [width=70mm, height=10mm] {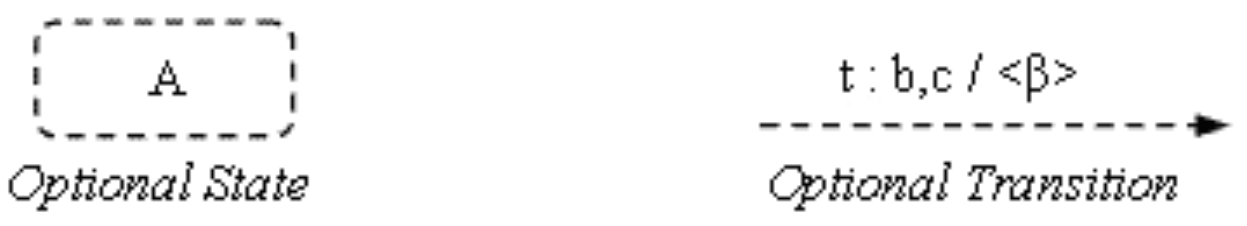}
\end{center}
\caption{Optional state and Optional transition.}
\label{fig:EstadoyTransicionOpcional}
\end{figure}

\subsection{Abstract Syntax of a SC$^*$}
Let $S^*,\; TR^*, \Pi^*$ and $A^*$ be set of states, transitions, events and actions of a SC$^*$, respectively. Now the terms that define a state have an additional component $S_{op} \in \{optional, no\_optional\}$ that we will call \emph{StateType(s)}, which indicates whether the state \emph{s} is optional or not. Similarly, we add component $t_{op} \in \{optional, no\_optional\}$ to the transitions, and we denote it by \emph{TransType(t)}. We also define the following sets of SCs$^*$ optional elements: $SOp \subseteq S^*$, $TOp \subseteq TR^*$ and $VarElem = SOp \cup TOp$.

We will refer to states directly by their names, when these are unique for every state in all the SC$^*$; otherwise, we will use the dot (.) as separator between state and substate names. A transition name is built by the trigger event name followed by source and target state names, respectively.

\subsection{Case study: MPs}
\label{Case study: MPs}
We considered here a family of MPs which share some functionalities, such as, for example, the capacity of reproducing monophonic sounds and vibration. Optionally, we could incorporate into the kernel of functionalities the capacity to make calls by means of quick-marked, to write text messages, to administer multimedia contents, and combinations of these, such as messages with multimedia content (images, polyphonic sounds and videos).
In order to exhibit an example in the development of this article, we formulate in figure \ref{fig:fm-del-ejemplo} a FM, using the notation proposed by Czarnecki, that relates the involved functionalities in the partially described MP of figure \ref{fig:SC*deunTM}.

\subsection{Relation between FMs and SCs$^*$}
\label{Relation between FMs and SCs*}
FMs and SCs$^*$ are complementary. Both model different aspects of a system and in our proposal, will not be treated independently, since SC$^*$ elements model behaviors of present functionalities in the FM. In general, a functionality is described by more than one SC$^*$ element. Due to this, we define a function \emph{Imp}, which represents the association between the SC$^*$ variant elements and the functionalities of the FM. This way we establish what variant elements of the SC$^*$ implement the characteristics of the system described in the FM. Given a FM ($Funcs,\: f_0,\: Mand,\: Opt,\: Alt,\: Or$-$rel$) and a SC$^*$ ($S^*,\; TR^*,\; \Pi^*$ and $A^*$), the type of function \emph{Imp} is as follows: $Imp: Funcs \rightarrow  Set(VarElem)$, where \emph{VarElem} is the set $SOp \cup TOp$, $SOp \subseteq S^*$ and $TOp \subseteq TR^*$.  

\begin{figure}[!h]
\begin{center}
\includegraphics [width=150mm, height=155mm]{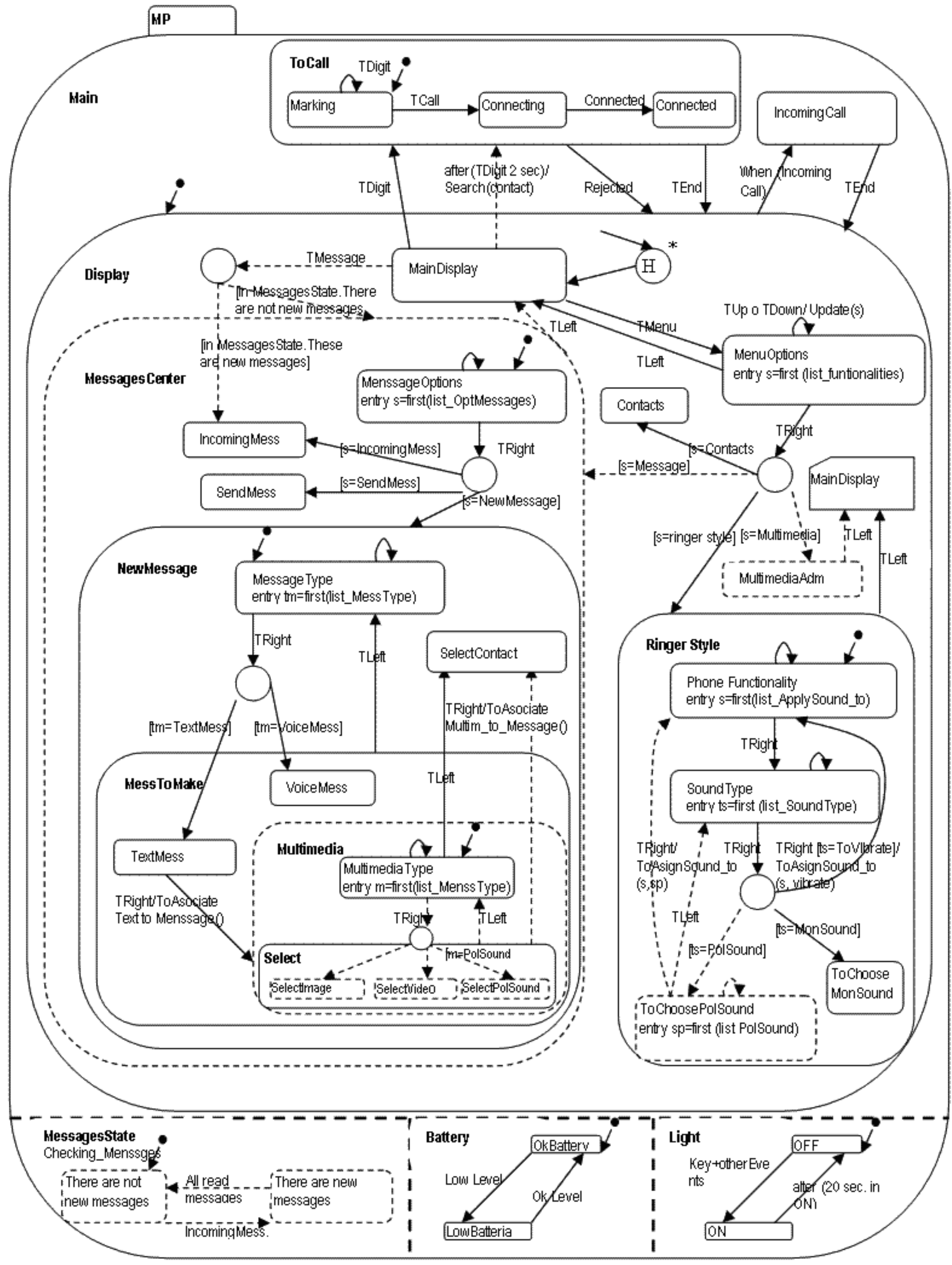}
\end{center}
\caption{SC$^*$ of MPs.}
\label{fig:SC*deunTM}
\end{figure}

\begin{figure}[!h]
\begin{center}
\includegraphics [width=115mm, height=25mm]{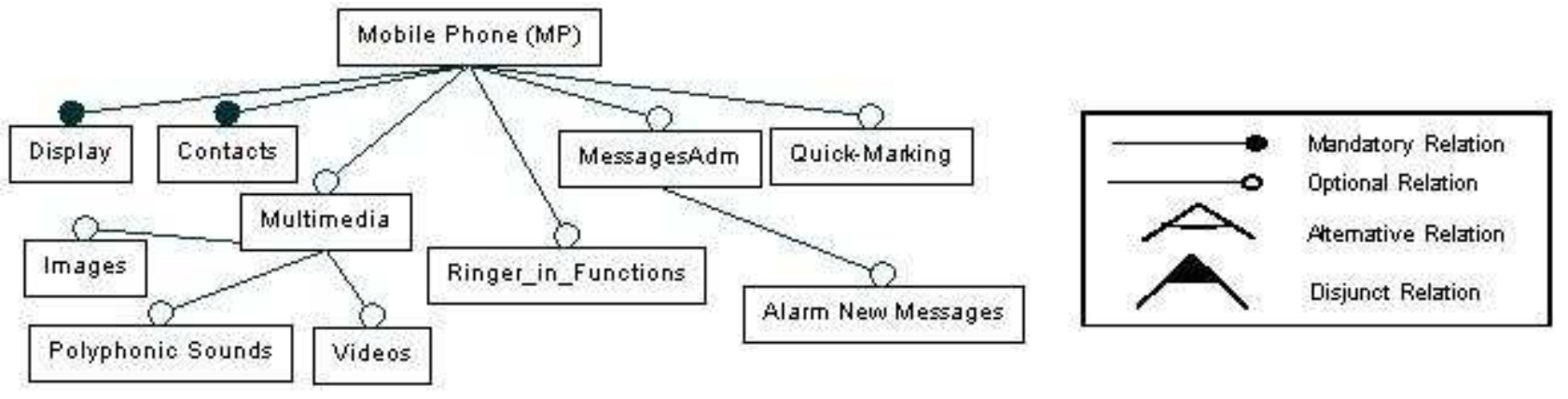}
\end{center}
\caption{FM of SC$^*$ of figure  \ref{fig:SC*deunTM}.}
\label{fig:fm-del-ejemplo}
\end{figure}

Taking into account that the mandatory functionalities are always present in all products of the line, it is not necessary to define the SC$^*$ syntactic elements that these implement. However, it is necessary to do it for those functionalities that cannot belong to FM configuration. \emph{Imp} will be then a \emph{partial function} defined on FM elements which do not belong to the \emph{kernel}. Therefore, the behavior of a PL is defined by a FM, a SC$^*$, and a function of implementation that binds them. 

\emph{\textbf{Example 1}}. Taking the FM of figure \ref{fig:fm-del-ejemplo} and the SC$^*$ of case study of figure \ref{fig:SC*deunTM}, we define SC$^*$ elements that implements functionalities of MP as follows:\\
$Imp$(\letraFunc{Polyphonic Sounds}) = \letraEstado{\{SelectPolSound, TRightMultimediaType-SelectPolSound,
 ToChoosePol\\Sound, TRightSoundType, ToChoosePolSound, TLeftToChoosePolSound-SoundType,
 TRightToChoo\\sePolSound-PhoneFuncionality, ...\}};\hspace{0.3cm}
$Imp$(\letraFunc{Multimedia}) = \letraEstado{\{Multimedia, AdmMultimedia, TRight\\Multimedia.Selecct-SelectContact\}}
 $\cup$  $Imp$(\letraFunc{Images}) $\cup$ 
 $Imp$(\letraFunc{Polyphonic Sounds}) $\cup$ $Imp$(\letraFunc{Videos});\\
 $Imp$(\letraFunc{MessagesAdm}) = \letraEstado{\{MessagesCenter,  TMessage-MainDisplay-IncomingMess, TMessage-MainDis\\play-MessagesCenter, \hspace{0.55mm} TLeft-MessagesCenter-MainDisplay,\hspace{0.55mm}
 TRight-OptionsMenu-MessagesCen\\ter\}} $\cup$ $Imp$(\letraFunc{Alarm New Messages});\hspace{0.3cm}
 $Imp$(\letraFunc{Alarm New Messages}) = \letraEstado{\{MessagesState, TMessageMainDis\\play-IncomingMess\}}. 

\section{Instantiation of StateCharts with Variabilities}
\label{Instantiation of StateCharts with Variabilities}
A FM configuration defines a product given a set of selected characteristics. Given a FM configuration and the SC$^*$ corresponding to the FM (linked via a function \emph{Imp}), we define an \emph{instantiation function} that returns a SC, which specifies the defined product behavior, $Inst: SC^* \times FMConf \rightarrow SC$.

We eliminate of the SC$^*$ both states and transitions which implement functionalities not present in FM configuration, via the use of the function \emph{Imp} defined in the previous section. The direct elimination of states as well as transitions of the SC$^*$ is not trivial. The suppression of SC$^*$ components without establishing a control can return inconsistent results, such as, for example, unreachable states or transitions without target. A control and rebuild mechanism of SCs starting from a SC$^*$ is defined in such a way that a concrete product is obtained. 

In section \ref{Cases and Rebuilding Rules} we present the cases and rules of rebuilding that constitute the base of the instantiation method which we included in section \ref{Instantiation Method}. Later, in section \ref{Instantiation of the Case study: MPs} we analyze our case study: MPs.

\subsection{Cases and Rebuilding Rules}
\label{Cases and Rebuilding Rules}
\hspace{0.58cm}Case 1. \emph{\textbf{When a state is deleted}}

Case 1.1.  \emph{\textbf{When a simple state is deleted}}

If a simple state $s = [E]$ is deleted, then their entry and exit optional transitions are deleted, while the mandatory transitions are composed using the following rebuilding method.

Let $E \in SOp$ be the state to eliminate, $A_1, ...,\:A_n$ predecessor states of \emph{E} (i.e., states from which there are non-optional transitions with target \emph{E}: $t_{AE\_1},\: ...,\: t_{AE\_n}$), and $S_1,\: ...,\: S_m$ successor states of \emph{E} (i.e., target states of non-optional transitions with source \emph{E}: $t_{ES\_1},\: ...,\: t_{ES\_m}$). When the variant state \emph{E} is deleted, all entry and exit transitions linked to \emph{E} are deleted. Simultaneously, new transitions are generated by the composition of the non-optional entry transitions $(t_{AE\_1},\: ...,\: t_{AE\_n})$ with the non-optional exit transitions $(t_{ES\_1},\: ...,\: t_{ES\_m})$.

\begin{figure}[!h] 
\begin{center}
 \includegraphics [width=160mm, height=27mm]{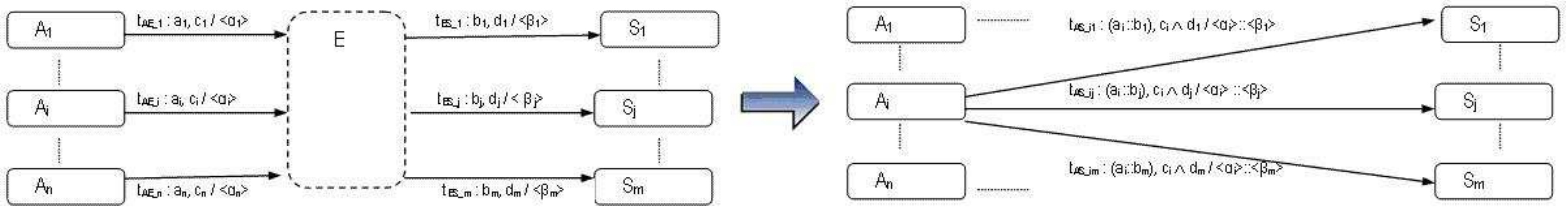}
 \end{center}
\caption{Resulting SC$^*$ after the deletion of the optional state.}
\label{fig:SCGeneralizadodelCaso1}
\end{figure}

The composition of two transitions $t_1 = (t_1,\:so_1,\: e_1,\: c_1,\: \alpha_1,\: sd_1,\: ht_1,\: no\_optional)$ and $t_2 = (t_2,\: so_2,$\\$e_2,\: c_2,\: \alpha_2,\: sd_2,\: ht_2,\: no\_optional)$ define a new transition as follows: $comp(t1,\:t2)=(t_{12},\: so_1,\:e_1::e_2,$\\$c_1\wedge c_2,\: \alpha_1::a_2,\: sd_2,\: ht_2,\: no\_optional)$, where $::$ is the sequential composition of events and actions, and $\wedge$ the conjunction of conditions. Both operations must be associative in order to make the instantiation method deterministic.

Let $sc = (S^*,\: TR^*,\: \Pi^*,\: A^*)$ be SC$^*$ (in future, we will omit the components $\Pi^*$ y $A^*$), we define the set of entry and exit non-optional transitions pairs of a state \emph{E} of $sc$ as follows: 

$T_{e\_s}(E) = \{(t_e, t_s) \in TR^* \times TR^* | \:\:target(t_e)=E \wedge \:source(t_s)=E \wedge \: TransType(t_e)= TransType(t_s)=no\_optional\}$.

The result of eliminating an optional Simple-state \emph{E} of $sc$ corresponds to the following SC$^*$:\\
$Delete\_simple\_state (E, (S^*, TR^*)) = (S^*- \{E\},\;\; TR^* \cup \{comp(t_e,\:t_s) |\;\; (t_e, t_s) \in T_{e\_s}(E) \} -  \{ t \in TR^* \:|\: t \in Domain(T_{e\_s}(E))\ \vee \ t \in Range(T{e\_s}(E)) \} - \{ t \in TR^* |\: TransType(t)=optional \wedge (source(t)=E \vee target(t)=E ) \})$
We call this rule $Delete\_simple\_state(E,\:(S^*,\:TR^*))$. Figure \ref{fig:SCGeneralizadodelCaso1} shows the result of their application.

Case 1.2. \emph{\textbf{When a Or-state is deleted}}

If an Or-state $s = [E,\: (s_1,\:...,\: s_k),\: l,\: T]$ is deleted, then their entry and exit optional transitions are deleted, while the mandatory transitions are composed using the following rebuilding method.

The proposal consists of applying the previous transition composition method of case 1.1 on \emph{E}, considering certain conditions and affectations to SC$^*$. Let $sc =(S^*,\: TR^*)$ be a SC$^*$, we previously define the set of all the entry and exit non-optional transition pairs of an Or-state \emph{E} of \emph{sc$^*$} as follows:\\
$T_{e\_s}(E) = \{ (t_e, t_s) \in TR^* \times TR^* |\: target(t_e)\in sub\_states(E) \wedge \:\: source(t_s)\in sub\_states(E) \wedge \: TransType(t_e)\\= \:TransType(t_s) \: = \: no\_optional\}$

We establish that each entry transition to \emph{E} is composed with one exit transition if the source state of the exit transition is reached from the target state of the entry transition. We define \emph{Reachable(E, A)} as the set of reachable substates of \emph{E} from the substate \emph{A}. Formally, we define $TComp_{e\_s}(E) = \{ (t_e, t_s)\in T{e\_s}(E) | \: source(t_s)\in Reachable(E, target(t_e)) \}$ as the set of transition pairs that must be composed by means of case 1.1, previous modification of these transitions as is indicated as follows. For each entry transition $t_e\in Domain(TComp_{e\_s}(E))$ its target state is now \emph{E}, i.e., $target(t_e)=E$. Also, for each exit transition $t_s\in Range(TComp_{e\_s}(E))$, $source(t_s)=E$. The result of eliminating the optional Or-state \emph{E} of \emph{sc$^*$} corresponds to the SC$^*$ following:

$Delete\_Or\_state (E,\:(S^*,\:TR^*)) = (S^* - (\{E\} \cup sub\_states(E)),\:\: TR^* \ \cup \ \{comp(change\_target(t_e,E),\\ change\_source(t_s ,E)) |\: (t_e, t_s)\in TComp_{e\_s}(E) \} - \{ t\in TR^* | source(t)\in (\{E\} \cup sub\_states(E)) \vee  target(t)\\ \in (\{E\} \cup sub\_states(E)) \})$

\emph{change\_target(t,E)} change the target of transition \emph{t}, such that \emph{target(t)=E}. Likewise, \emph{change\_source(t, E)} change the source of transition \emph{t}, such that \emph{source(t)=E}.
We call this rule \emph{Delete\_Or\_state(E, (S$^*$, TR$^*$))}.

Case 1.3. \emph{\textbf{When an And-state is deleted}}

If an optional And-state \emph{E} is deleted, then their entry and exit optional transitions are deleted, while the mandatory transitions are composed using a similar rebuilding method to case 1.2.

Two possible relations of dependency or synchronization between parallel states exist. One of them refers to the occurrence of an event that produces the trigger of two or more transitions belonging to each one of the parallel substates. The second relation corresponds to using conditions of type `\emph{in E}' (see case 2.2). The latter forces to redefine the concept of reachability, since it is not valid to apply the previous definition of reachability in a way independent in each one of the orthogonal states.

Let \emph{E} be an And-state with \emph{n} orthogonal states. We define now $Reachable(E , (E_1, E_2 , ..., E_n ))$ as the set of n-tuples of reachable states from $(E_1, E_2 ,..., E_n )$. In this way, maintaining the definition of $T_{e\_s}(E)$ of the previous case and redefining $TComp_{e\_s}(E)$, it is possible to solve the method of transition elimination and composition (\emph{Delete\_And\_state}) in an analogous form to case 1.2.

$TComp_{e\_s} (E) = \{ (t_e, t_s)\in T_{e\_s}(E) | (\exists \  n$-$tupla\_init, n$-$tuple\_end \in (S^* \times S^* \times ... \times S^*)_n | \;\; n$-$tupla\_end \in Reachable(E,\: n$-$tupla\_init) \wedge (\exists i,j 1\leq i,j\leq n | n$-$tupla\_end[i] = source(t_s) \wedge n$-$tupla\_init[j] = target(t_e) \wedge cond(t_s) )) \}$, 
being \emph{n} the amount of orthogonal states.
We call to this rule \emph{Delete\_And\_state (E, (S$^*$, TR$^*$))}.\\

Case 2. \emph{\textbf{Consequences of the elimination of a state}}

Some situations can appear as a consequence of applying the  cases described previously, which must be considered in order to reestablish the SC. These situations are analyzed in the following cases.

Case 2.1. \emph{\textbf{When an initial state is deleted}}

If an initial state of a state $E = [s,\: (s_1,\:...,\: s_k),\: l,\: T]$ is eliminated, then anyone of their successors that belongs to \emph{E} becomes the new initial substate, i.e. if $\exists \: t, t=(t_{name},\: s_1,\: e,\: c,\: \alpha,\: s_{new\_initial},\: ht) \in T | s_{new\_initial} \in (s_2,\:...,\: s_k)$ then $initial'(E) = s_{new\_initial}$.
We call this rule \emph{Delete\_Initial\_state(E, (S$^*$, TR$^*$))}.

Case 2.2.  \emph{\textbf{When some substate in a parallel decomposition is deleted}}

The conditions of transitions in a parallel decomposition of type `\emph{in E}' are deleted when the state \emph{E} is eliminated via some FM configuration, i.e. if \emph{E} is deleted and $t = (t_{name},\: s_o,\: e,\: c,\: \alpha,\: s_d,\: ht) \in TR |\: `in\: E$'$ \in c$, then $t' = (t_{name},\: s_o,\: e,\: delete\_expresion\_in\_condition(`in E$'$,c),\: \alpha,\: s_d, \:ht)$, where the function \emph{delete\_expresion\_in\_condition(expr, c)} eliminates the logic subexpression of \emph{c}.
We call this rule \emph{Delete\_condition(E, (S$^*$, TR$^*$))}.\\


Case 3. \emph{\textbf{When a transition is deleted}}

If a transition \emph{t} of a SC$^*$ disappears, it does not produce alterations in the SC$^*$.
We call this rule \emph{Delete\_transition(t, (S$^*$, TR$^*$))}.\\

Case 4. \emph{\textbf{Changing the optional elements to non\_optional elements}}

This rule will be used in the reduction strategy, in order to make part of the final product all optional elements remainders. That is to say, $\forall  s \in SOp^*, StateType(s') = non\_optional$. In a same way, $\forall t \in TR^*, TransType(t') = non\_optional$.
We call this rule \emph{Changing\_optional\_to\_non\_optional(S$^*$, TR$^*$)}.

\subsection{Instantiation Method}
\label{Instantiation Method}

Given a FM and its configuration, we will call \emph{NSF} the set of \emph{non-selected functionalities} of the model as consequence of the configuration. Formally, for the FM $fm = (Funcs,\: f_0,\: Mand,\: Opt,\: Alt,\: Or$-$rel)$ and a configuration $conf_{fm} = (F,\: R)$ of \emph{fm}, \emph{NSF = Funcs $-$ F}. We define also the set of \emph{non-selected components(NSC)} by the configuration $conf_{fm}$ of SC$^*$ \emph{sc}, which will not be part of the resulting SC, as follows: $NSC(conf_{fm},\: sc) = \{ x \in VarElem | \exists f_i \in NSF: x \in Imp(f_i) \wedge \neg \exists f_i' \in F: f_i' \neq f_i \wedge x \in Imp(f_i') \}$, with \emph{VarElem} e \emph{Imp} defined for \emph{sc} according to section 4.4. This is, the states and transitions that do not implement selected functionalities by a configuration will be excluded, through the rules, from the behavior of resulting SC.

\subsubsection{Application Strategy}
The rules should be executed in a certain order for the result to be deterministic. We will assume that rules are organized in rule sets, which are then organized in a sequence of rule sets in which each rule set can be considered as a layer \cite{JK06}. Within a rule set, rules maybe applied in a non-deterministic order. Syntactically we express layers of rule sets as  shown in the following example: given three rules \emph{p1}, \emph{p2}, and \emph{p3}, $(\{p1, p2\}, p3\downarrow)$ specifies two layers. The first one contains \emph{p1}, \emph{p2}, and the second contains \emph{p3}. This means that first any rule in the first layer is applied, and then the one in the second layer is applied. The symbol $p\downarrow$  denotes that the rule $p\downarrow$ is iterated until it cannot be applied anymore.

The order of application of the rules will determine the order of selection of the elements in \emph{NSC (conf$_{fm}$, sc)}. Let $sc \in SC^*$, $fmc \in FMConf$, and $E,t \in NSC(fmc,\: sc)$. We establish the following application order of rules:
\begin{center}
$S_R  = ( \{Delete\_Initial\_state(E,\: sc),\: Delete\_condition(E,\: sc)\}\downarrow ;$\\$ \{Delete\_simple\_state(E,\: sc),\: Delete\_ Or\_state(E,\: sc),\: Delete\_And\_state(E,\: sc)\}\downarrow ;$\\$ Delete\_transition(t,\: sc)\downarrow ;\: Changing\_optional\_to\_non\_optional(sc))$
\end{center}

The strategy should be applied only once. It is important to consider that the application of a rule produces the elimination of elements in $NSC(fmc,\: sc)$. The strategy is completed when $NSC(fmc,\: sc)= \emptyset$.
Note that the implementation of the first two rules does not eliminate states, but only change initial states and deleting conditions like `in E'. Later the affected states will be eliminated by the successor rules. The rule \emph{Changing\_optional\_to\_non\_optional(sc)} convert optional elements, which remain in \emph{sc}, to non\_optional elements.

\subsubsection{Termination and Deterministic Implementation}

Termination and confluence problems can occur whenever a rule-based approach is used. As termination and confluence are fundamental for the correctness of a model transformation, systematically validating these properties is a main prerequisite for successful practical applications, such as transformations in the MDA context, for example.

In our context, the property of termination is clear, given that rules are applied on elements of \emph{NSC(fmc, sc)}, and these are eliminated in each application until $NSC(fmc,\: sc)=\emptyset$ .
Therefore the strategy is terminating because the rules decrease components of the statecharts (either states or transitions).

In general, confluence means that, when there is a choice of applying different rules, the choice does not affect the result.
The role of the first two rules is to avoid ill-formed statecharts. The main rule is \emph{Delete\_simple\_state(E, sc)}, since \emph{Delete\_Or\_state(E, sc)} and \emph{Delete\_And\_state(E, sc)} are based on this. Note that $Delete\_simple\_state(E, sc)$ does not produce ill-formed statecharts (see case 1.1).

The application of a rule \emph{r} on a SC$^*$ is denoted by $sc_i^* \stackrel{r}{\rightarrow} sc_{i+1}^*$, and $sc_i^* \stackrel{(r_1,\: r_2)}{\rightarrow} sc_{i+2}^*$ denotes the application of $r_1$ y $r_2$ in that order.

\begin{lem}(Applying two rules.)
Let $r_1$, $r_2 \in \{Delete\_simple\_state(E,\: sc),\; Delete\_Or\_state(E,\: sc),\; Delete\\\_And\_state(E,\: sc)\}$ be two rules to be applied to states $E1$ and $E2 \in NSC(fmc,\: sc)$, respectively.
\begin{center}
$sc_i^* \stackrel{(r_1,\: r_2)}{\rightarrow} sc_j^* \; \wedge \; sc_i^* \stackrel{(r_2,\: r_1)}{\rightarrow} sc_t^* \;\;\; \Rightarrow \;\;\; sc_j^* = sc_t^*$.
\end{center}
\end{lem}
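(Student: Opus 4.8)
The plan is to prove the result by a case analysis on the disjointness (or nesting) relation between the state hierarchies rooted at $E1$ and $E2$, and to show that in every case the two deletion rules operate on essentially disjoint pieces of the statechart, so they commute. First I would observe that each of the three deletion rules has the same shape: given a target state $X$, it removes from $S^*$ the set $\mathit{cluster}(X) := \{X\}$ (for a simple state) or $\{X\}\cup sub\_states(X)$ (for an Or-/And-state), it deletes all transitions incident to $\mathit{cluster}(X)$ (both optional and the composed non-optional ones in $T_{e\_s}$ resp.\ $TComp_{e\_s}$), and it adds a set of freshly composed transitions whose sources and targets lie \emph{outside} $\mathit{cluster}(X)$ — this last point is exactly what case~1.1 (and the remarks after cases 1.2, 1.3) guarantees, namely that $Delete\_simple\_state$ and its generalizations do not produce ill-formed statecharts, so every composed transition reconnects two states that survive the deletion.

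Next I would split into cases. \textbf{Case A: $\mathit{cluster}(E1)$ and $\mathit{cluster}(E2)$ are disjoint and neither contains the other as a substate of an ancestor chain that the other rule touches.} Here the sets of deleted states are disjoint, the sets of transitions each rule deletes are disjoint (a transition incident to $\mathit{cluster}(E1)$ cannot be incident to $\mathit{cluster}(E2)$), and — the key point — the transitions that $r_1$ \emph{adds} are not incident to $\mathit{cluster}(E2)$, and conversely, so neither added set is disturbed by the other rule. One then checks that $T_{e\_s}(E2)$ (or $TComp_{e\_s}(E2)$) computed in $sc_i^*$ equals the corresponding set computed after applying $r_1$: the only transitions $r_1$ removes that could have been an entry/exit transition of $E2$ would have to be incident to $\mathit{cluster}(E1)$ \emph{and} to $\mathit{cluster}(E2)$, impossible by disjointness; and the transitions $r_1$ adds are composites of transitions incident to $\mathit{cluster}(E1)$, whose new endpoints are outside $\mathit{cluster}(E1)$ but could in principle be in $sub\_states(E2)$ — this is precisely the subtle subcase. \textbf{Case B: one cluster is nested inside the other}, say $E2 \in sub\_states(E1)$ (transitively). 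Then $\mathit{cluster}(E2)\subseteq \mathit{cluster}(E1)$, so applying $r_1$ first already removes everything $r_2$ would remove; I would argue that applying $r_2$ first and then $r_1$ yields the same $S^*$ (clearly) and the same $TR^*$, by checking that the composite transitions produced by $r_1$ are the same whether or not the inner deletion $r_2$ has already fired — here one uses associativity of $::$ and $\wedge$ (invoked in the definition of $comp$ precisely for this purpose) to see that composing ``through'' $E1$ gives the same result regardless of the order in which intermediate states inside $E1$ were collapsed, together with the fact that $Reachable$ is a transitive-closure notion and hence insensitive to whether intermediate collapsing has occurred.

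The main obstacle I anticipate is the subcase flagged in Case A, where a composite transition added by $r_1$ has an endpoint lying in $sub\_states(E2)$: then $r_2$ applied afterwards must pick up this new transition into its $T_{e\_s}(E2)$ / $TComp_{e\_s}(E2)$ set and re-compose it, and I must show this produces exactly the same final transition as doing $r_2$ first (which would compose the original $E1$-incident transition, producing a composite that $r_1$ then further composes). Reconciling these two orders is again an exercise in the associativity of event-sequence composition ($::$) and condition-conjunction ($\wedge$), plus the observation that the reachability relation used to decide \emph{which} pairs get composed is stable under the deletions (collapsing a state does not create or destroy reachability between surviving states). I would therefore structure the proof as: (1) record the uniform ``delete-cluster / recompose-through'' description of the three rules; (2) dispose of the disjoint, non-interacting case by a direct set-equality computation on $S^*$ and $TR^*$; (3) handle the interacting cases (nested clusters, or a composite of one rule feeding the other) by reducing everything to associativity of $comp$ and stability of $Reachable$. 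Strictly, the lemma as stated asserts equality when \emph{both} orders are applicable; I would also note in passing that applicability itself is order-independent here, since $E1, E2 \in NSC(fmc,sc)$ are fixed in advance and neither rule can remove the other's target from $NSC$.
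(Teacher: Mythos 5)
Your proposal takes essentially the same route as the paper's proof: split on whether the two deleted states interact (the disjoint case being a direct set computation), and resolve the interacting case by the associativity of $::$ and $\wedge$ built into the definition of $comp$, so that $comp(comp(t_{E_1}, t_{E_1\_E_2}), t_{E_2}) = comp(t_{E_1}, comp(t_{E_1\_E_2}, t_{E_2}))$ regardless of which state is collapsed first. Your plan is in fact more thorough than the paper's argument, which only treats the chain configuration explicitly and does not address the nested case (one deleted state inside another's cluster) or the stability of $Reachable$ under intermediate collapsing that you correctly flag as the points needing care.
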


\begin{proof}
Let $E_1$ and $E_2 \in  NSC(fmc,\: sc)$  be the states to eliminate. We consider both independent if there is no transition which connects, in this case it is clear that the elimination in any order produces the same SC. 
The opposite case is discussed.
Let $t_{E_1} = (t_{E_1},\: so_1,\: a_1,\: c_1,\: \alpha_1,\: E_1,\: ht_1,\: non\_optional)$, $t_{E_1\_E_2} = (t_{E_1\_E_2},\: E_1,\: a_{12},\: c_{12},\: \alpha_{12},\: E_{12},\: ht_{12}, non\_optional)$ and $t_{E_2} = (t_{E_2},\: E_2 ,\: a_2,\: c_2,\: \alpha_2,\: sd_2,\: ht_2,\: non\_optional)$ be entry and exit transitions of $E_1$ and $E_2$.
On the one hand, we assume that the rule $Delete\_simple\_state(E_1,\\sc)$ is applied before  rule \emph{Delete\_simple\_state($E_2$, sc)}, i.e. $t_{E_1}$ is composed with $t_{E_1\_E_2}$ and after compose with $t_{E_2}$.

$comp(comp(t_{E_1},\: t_{E_1\_E_2}), t_{E_2}) = t_{E_1\_(E_1\_E_2)\_E_2}  = (t_{E_1\_(E_1\_E_2)\_E_2},\: so_1,\: (a_1 ::  a_{12}) :: a_2,\:  (c_1 \wedge c_{12}) \wedge c_2,\: (\alpha_1 :: \alpha_{12}) :: \alpha_2 ,\: sd_2 ,\: ht_2,\: non\_optional)$.



On the other hand, if we assume now the inverse application of both rules, we can see that the resulting composition has the same result as the previous case.
Given that the operators $\wedge$ and $::$ are associative: $comp(comp(t_{E_1},\: t_{E_1\_E_2}),\: t_{E_2}) = comp(t_{E_1},\: comp(t_{E_1\_E_2},\: t_{E_2}))$. 

The rules \emph{Delete\_Or\_state} and \emph{Delete\_And\_state} are based on the rule \emph{Delete\_simple\_state}, therefore we may consider $E_1$ and $E_2$ (see above) as compound states.
\end{proof}

Let SC$^*$, $R$ y $S_R$ be a rule-based rewriting system on Statecharts (RS-SC), where $R$ is the set of rules defined in section \ref{Cases and Rebuilding Rules} and $S_R$ the implementation strategy on the rules of $R$. Let $ValSeq\_S_R$ be the set of all valid sequences of applications.

An RS-SC execution starting from a SC$^*$ ($sc^*$) is a sequence $s$ of rule applications of $R$ such that $s \in ValSeq\_S_R$. We denote it by $sc^* \stackrel{s}{\rightarrow} sc$, where $sc \in$ SC is a concrete statechart.

\begin{thm}(Confluence.) Two distinct executions of RS-SC generate the same statechart.
\begin{center}
$sc^* \stackrel{s_1}{\rightarrow} sc \wedge sc^* \stackrel{s_2}{\rightarrow} sc' \;\;\; \Rightarrow \;\;\; sc =  sc'$.
\end{center}
\end{thm}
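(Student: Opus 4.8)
I would prove confluence of the whole rewriting system RS-SC by exploiting the layered structure of the strategy $S_R$. Since $S_R$ fixes the order of the four rule groups, the only nondeterminism is (i) the order in which elements of $NSC(fmc, sc)$ are picked within a layer, and (ii) within each layer, the order in which the individual rules of that layer fire. So the theorem reduces to showing \emph{local confluence within each layer}, and then gluing the layers together. The global skeleton of the argument is: \textbf{(1)} observe that $NSC(fmc, sc)$ is computed once from the inputs and is independent of the execution order, so every valid sequence $s \in ValSeq\_S_R$ operates on exactly the same target set of variant elements; \textbf{(2)} show that each of the three ``real'' layers is locally confluent on that set; \textbf{(3)} invoke termination (already argued in the excerpt, since every rule strictly decreases the number of not-yet-processed elements of $NSC$) together with Newman's Lemma to upgrade local confluence to confluence of each layer; \textbf{(4)} since layers are applied sequentially and each layer produces a unique normal form, compose the layer-by-layer uniqueness to conclude $sc = sc'$.

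For step (2), the work splits per layer. The first layer $\{Delete\_Initial\_state, Delete\_condition\}\!\downarrow$ only rewrites the initial-state pointer of a composite state and strips `in E' subexpressions from guards; these two operations act on disjoint syntactic loci (one touches $initial(\cdot)$, the other touches $cond(\cdot)$ of transitions), and distinct eliminated states $E$ affect distinct conditions, so the effects commute and the layer is confluent essentially by inspection. The third layer is a single idempotent global rule $Changing\_optional\_to\_non\_optional$, trivially confluent. The substantive case is the second layer $\{Delete\_simple\_state, Delete\_Or\_state, Delete\_And\_state\}\!\downarrow$: here I would invoke Lemma~1 directly. Lemma~1 establishes that for any two states $E_1, E_2 \in NSC(fmc, sc)$, applying the deletion rules in either order yields the same SC$^*$ — the key algebraic fact being associativity of $::$ and $\wedge$, which makes $comp$ associative so that $comp(comp(t_{E_1}, t_{E_1\_E_2}), t_{E_2})$ is order-insensitive, together with the observation that for non-adjacent states the two deletions are simply independent. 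Since within this layer every rule application consumes one element of $NSC$ and the layer iterates until $NSC = \emptyset$, Lemma~1 is exactly the local-confluence diamond for this layer, and Newman's Lemma closes it.

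Finally, for step (4): write a valid execution as $sc^* \to_{L_1} sc_1 \to_{L_2} sc_2 \to_{L_3} sc$ where $L_1, L_2, L_3$ are the three layers. By steps (2)--(3) each $\to_{L_i}$ has a unique normal form as a function of its input, so $sc_1$ is determined by $sc^*$, then $sc_2$ by $sc_1$, then $sc$ by $sc_2$; hence any two executions $s_1, s_2 \in ValSeq\_S_R$ yield the same final $sc$. I would also remark that the rule sets in different layers are designed so that an earlier layer never re-enables a rule of a later layer and vice versa (e.g. deleting conditions or retargeting initial states never creates new `in E' guards and never undoes a deletion), which is what legitimizes treating the layers as a sequential composition rather than as a single intertwined rewriting relation.

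\textbf{Main obstacle.} The delicate point is not the associativity computation — that is handled by Lemma~1 — but justifying that the layered decomposition genuinely captures \emph{all} nondeterminism, i.e. that no valid sequence interleaves layers in a way the ``apply layer $L_1$ to normal form, then $L_2$, then $L_3$'' picture misses, and that applying rules of an earlier layer after a later layer has begun can never produce anything new. Concretely, I would need to argue that $Delete\_Initial\_state$ and $Delete\_condition$ performed on the final-normal-form-of-$L_1$ structure commute with everything downstream, and that the consequences handled in Case 2 (new initial substates, stripped guards) are stable under the subsequent state deletions — so that reaching $NSC = \emptyset$ is the \emph{only} possible terminal condition and it is reached with a unique result regardless of the admissible scheduling. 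Pinning down this ``layers don't leak into each other'' invariant, rather than the arithmetic of transition composition, is where the real care is required.
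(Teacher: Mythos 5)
Your proof is correct in substance but takes a different route from the paper's. The paper does not use Newman's Lemma at all: it observes that every valid execution applies exactly one rule per element of $NSC(fmc,sc)$, so any two executions $s_1$ and $s_2$ have the same length and apply the same multiset of rules, hence $s_2$ is a permutation of $s_1$; after discarding the rules of the first and last layers (which are asserted to be deterministic and to delete nothing), the remaining subsequences of state-deletion rules are transformed into one another by adjacent transpositions, each of which preserves the result by Lemma~1. This permutation-plus-swap argument is more elementary than your layer-wise Newman's Lemma construction and exploits structure your argument does not need: because the set of redexes is fixed in advance by $NSC$, one never has to worry about one rule application creating or destroying redexes for another, which is precisely the situation Newman's Lemma is designed to handle in general. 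Your approach buys generality and, importantly, your ``main obstacle'' --- that the layers do not leak into each other and that $Delete\_Initial\_state$/$Delete\_condition$ interact benignly with the later deletions --- is a genuine issue that the paper's proof dismisses in one clause (``clearly deterministic and do not remove states or transitions'') without argument; making that invariant explicit, as you propose, would strengthen the published proof. Conversely, you could simplify your own argument by adopting the paper's observation that all executions are permutations of a fixed rule multiset, which lets you replace Newman's Lemma by a direct induction on the number of transpositions needed to sort one subsequence into the other.
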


\begin{proof}
The proof is given by the following:\\
Both executions, $s_1$ and $s_2$, have the same length, since the number of applications of rules is equal to $\#NSC$. Moreover, the set of rules of sequence $s_1$ is equal to set of sequence $s_2$, therefore $s_1$ is a permutation of $s_2$.
Since the rules $Delete\_Initial\_state$, $Delete\_condition$ and $Changing\_optional\_to\_non\_op\-tional$ of $S_R$ are clearly deterministic and do not remove states or transitions, we focus only on the subsequences of $s_1$ ($sub\_s_1$) and $s_2$ ($sub\_s_2$)	which include rules $Delete\_simple\_state$, $Delete\_Or\_state$ and $Delete\_And\_state$.


Since $sub\_s_2$ is a permutation of $sub\_s_1$, the proof follows immediately from Lemma 1 by swap rules in any of sequences. 
\end{proof}

\subsection{Instantiation of the Case study: MPs}
\label{Instantiation of the Case study: MPs}

The FM of figure \ref{fig:fm-del-ejemplo} can be configured to characterize different MPs, according to the specification of the case study of section \ref{Case study: MPs}. Next, we present a configuration of a MP of figure \ref{fig:SC*deunTM} and we proceed towards obtaining the corresponding SC (the MP wanted), according to the application order of rules defined in section \ref{Instantiation Method}. 

A MP with neither the support for the management of polyphonic sounds nor the capacity of alerting the user when new messages enter in the incoming mailbox, is defined by the configuration $conf_{fm} = (F,\: R)$ of the FM of figure \ref{fig:fm-del-ejemplo}, where: 

{\small $F = \{\letraFunc{MP,\: Display,\: Contacts,\: MessagesAdm,\: Multimedia,\: Images,\: Videos,\: Quick}$-$ \letraFunc{Marking,Ringer\_in\_func}\\ \letraFunc{tions\}}$}, and

{\small $R = \{\letraFunc{	(MP,\: \{Multimedia\}),\:\: (MP,\: \{MessagesAdm\}),\:\: (MP,\: \{Quick}$-$ \letraFunc{Marking\}),\:\: (MP,\: \{Display\}),\:(MP,\: \{Con}\\ \letraFunc{tacts\}),\:\: (MP,\: \{Ringer\_in\_ functions\}),\:\: (Multimedia,\: \{Images\}),\: (Multimedia,\: \{Videos\})\}}$.}

Taking the previous configuration and the function \emph{Imp} described in example 1 of section \ref{Relation between FMs and SCs*}, the sets $NSF$ and $NSC(conf_{fm},\: sc)$ are defined as follows: 

{\small $NSF = \{\letraFunc{Polyphonic Sounds, AlarmNewMessages\}}$}, and

{\small $NSC(conf_{fm},\: sc) = \letraEstado{\{SelectPolSound, TRightMultimediaType-SelectPolSound, ToChoosePolSou}
\\ \letraEstado{nd, TRightSoundType-ToChoosePolSound, TLeftToChoosePolSound-SoundType,
 TRightToChoosePo}\\ \letraEstado{lSound-PhoneFuncionality,} \letraEstado{ MessagesState, TMessage-MainDisplay-IncomingMess\}}$.} 

In accordance with the strategy defined in section \ref{Instantiation Method}, must apply the rules in the following order:

{\small $Delete\_condition (MessagesState, sc) \rightarrow$ $Delete\_simple\_state(SelectPolSound, sc) \rightarrow$
$Delete\_simple\_state(To\\ChoosePolSound, sc) \rightarrow$ $Delete\_transition(TRightMultimedia$-$Type$-$SelectPolSound , sc) \rightarrow$ $Delete\_transition(\\TRightSoundType$-$ToChoosePolSound, sc) \rightarrow$ $Delete\_transition(TLeftToChoosePolSound$-$SoundType, sc) \rightarrow $ \\$Delete\_transition(TRightToChoosePolSound$-$PhoneFuncionality, sc) \rightarrow$
$Delete\_transition(TMessage$-$MainDis\\play$-$Incoming Mess, sc) \rightarrow$
$Changing\_optional\_to\_non\_optional(sc).$}

The resulting SC is observed in figure \ref{fig:SCresulting}.

\begin{figure}[!h]
\begin{center}
\includegraphics [width=150mm, height=135mm]{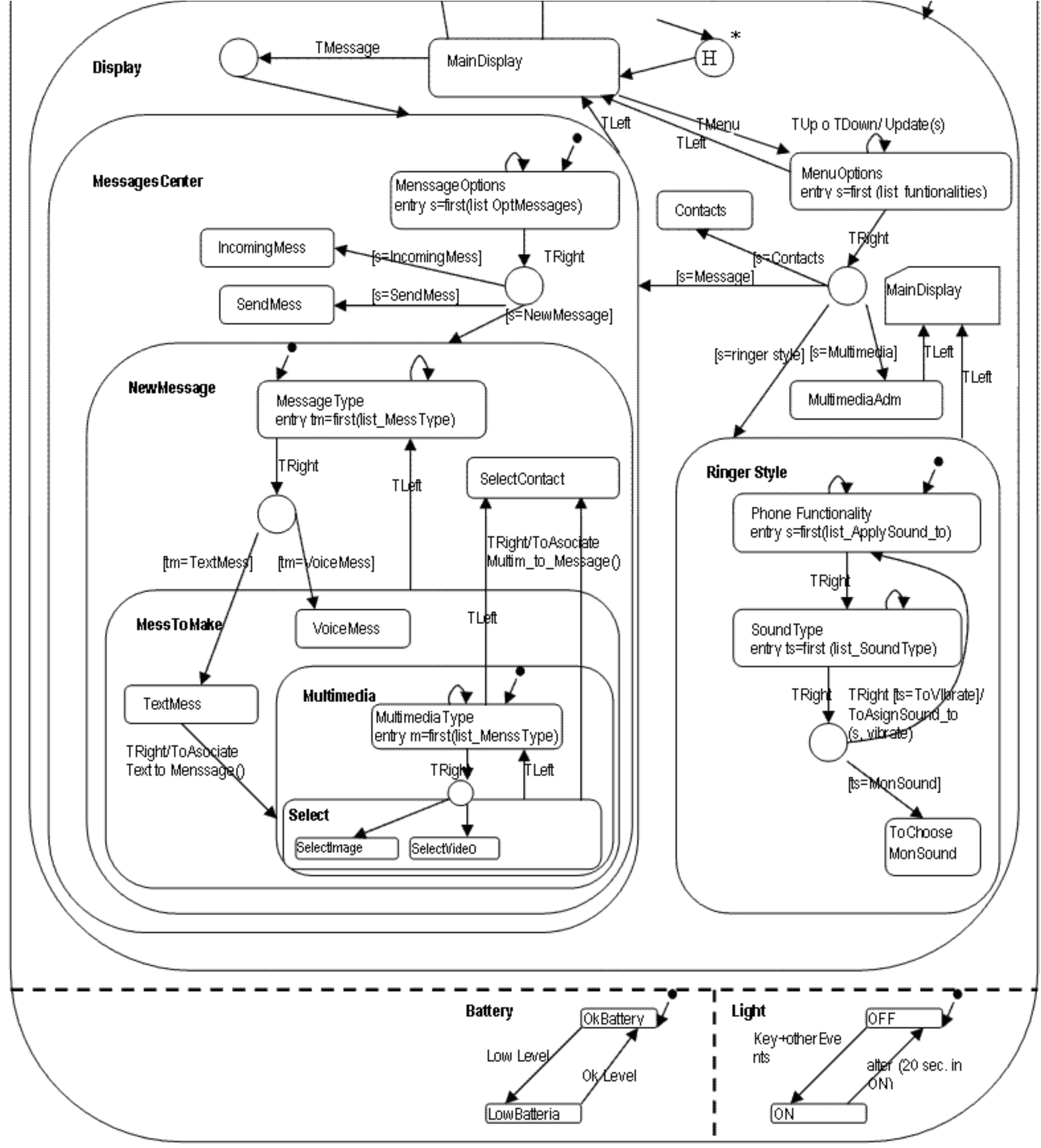}
\end{center}
\caption{SC for a MP without polyphonic sounds and without alert of new messages.}
\label{fig:SCresulting}
\end{figure}

\section{Related Works}
\label{RelatedWorks}
A variety of existing approaches propose the incorporation of variability in software systems and in particular on PLs. One of these is the one designed by Jacobson \cite{JGJ97}, whose weaknesses have been analyzed by several authors.
Von der Maßen \cite{vdML02} proposes the use of a graphical notation with new relations. John and Muthig \cite{JM02} suggest the application of use case templates, although they do not distinguish between optional, alternative or obligatory variants. However, Halman and Pohl propose in \cite{LGL07} to make use of UML 2 package merge, based on \cite{ZDD06}, as a tool for the variability representation and configuration in PLs. As opposed to the mentioned proposals, our solution is centered in one behavior specification model of a PL, as are the SCs, introducing a clearly defined formal sustenance.

Also, to define PLs and to characterize their different products we use FMs, which admit a formal definition and allow us to configure the functional characteristics of a line. An alternative approach to this paper is developed in parallel by other members of the research project in which this work is subsumed. In \cite{SV08} the authors, in a formal framework, define functions that associate SCs (not components of SCs, as in our case) to functionalities of a FM, and analyze forms of combination between different SCs which specify possible variants of a PL. Whereas under our method the behavior of a product into a PL is obtained basically by a selection process, in \cite{SV08} the focus is oriented towards a process of SC combinations.


\section{Conclusion and Further Work}
\label{ConclusionandFurtherWorks}
Most of the techniques of Model Driven Development make use of UML. In particular, the SCs of UML constitute a mechanism for specifying systems behavior by means of a graphical representation. In this work, we presented an extension of UML SCs by incorporating variability in their essential components to specify PLs. The variability is introduced in the SCs distinguishing optional and non-optional states as well as optional and non-optional transitions. A PL is specified with a SC$^*$, a FM, and a formal relation (an implementation function) that binds both models. Using FMs to describe the common and variant functionalities and applying a rule-based instantiation method, concrete SCs corresponding to different PLs can be obtained. The approach defines the transformation strategy from extended SCs to standard UML SCs. We develop partial examples of a case study based on mobile phone technology, whose full version is not included in this article due to space restrictions.

Given the fact that UML and SCs have become very successful languages for analysis and design in the very short run, we are confident that the results of this work can be successfully applied to the real problems of the software industry. It is a timely contribution to an authentic and actual problem. 

As part of our plans for future work, we are interested in an extension of SCs which allows us to completely cover the UML 2.0 SCs and analyze variabilities, not only the ones considered in this paper, but in all of their components. Also, we will make an attempt to provide a formal semantics for the extension. This semantics is an essential preliminary step towards both the automatic code generation and the validation of complex software systems. 

\bibliographystyle{eptcs} 

\end{document}